\title{Lower Bounds on Retroactive Data Structures}
\author{Lily Chung}{Massachusetts Institute of Technology, Cambridge, MA, USA}{ikdc@mit.edu}{https://orcid.org/0000-0001-7056-6155}{}
\author{Erik D. Demaine}{Massachusetts Institute of Technology, Cambridge, MA, USA}{edemaine@mit.edu}{https://orcid.org/0000-0003-3803-5703}{}
\author{Dylan Hendrickson}{Massachusetts Institute of Technology, Cambridge, MA, USA}{dylanhen@mit.edu}{https://orcid.org/0000-0002-9967-8799}{}
\author{Jayson Lynch}{Cheriton School of Computer Science, University of Waterloo,  Waterloo, ON, Canada}{jaysonl.lynch@uwaterloo.ca}{}{}
\authorrunning{L. Chung, E.\,D. Demaine, D. Hendrickson, and J. Lynch}
\keywords{Retroactivity, time travel, rollback, fine-grained complexity}
\newcommand\metaop{\textsf}
\newcommand\partially[1]{\metaop{Partial-Retro}(#1)}
\newcommand\fully[1]{\metaop{Full-Retro}(#1)}
\newcommand\lazy[1]{\metaop{Lazy}(#1)}
\newcommand\lazyop{\metaop{Lazy}}
\newcommand\op{\textsc}
\newcommand\evaluate{\op{evaluate}}
\newcommand\domain{\operatorname{domain}}
\newcommand\codomain{\operatorname{codomain}}
\newcommand\Insert[1]{\op{Insert}_{#1}}
\newcommand\Query[1]{\op{Query}_{#1}}
\newcommand\Delete{\op{Delete}}
\newcommand\ADT{\mathcal{A}}
\newcommand\DS{\mathcal{D}}
\theoremstyle{definition}
\newtheorem{definitionx}[theorem]{Definition}
\newtheorem{conjecturex}{Conjecture}
\let\epsilon=\varepsilon
\def\defn#1{\textbf{\textit{\boldmath #1}}}
\begin{document}

\maketitle

\begin{abstract}
We prove essentially optimal fine-grained lower bounds on the gap between a data structure and a partially retroactive version of the same data structure. Precisely, assuming any one of three standard conjectures, we describe a problem that has a data structure where operations run in $O(T(n,m))$ time per operation, but any partially retroactive version of that data structure requires $T(n,m) \cdot m^{1-o(1)}$ worst-case time per operation, where $n$ is the size of the data structure at any time and $m$ is the number of operations. Any data structure with operations running in $O(T(n,m))$ time per operation can be converted (via the ``rollback method'') into a partially retroactive data structure running in $O(T(n,m) \cdot m)$ time per operation, so our lower bound is tight up to an $m^{o(1)}$ factor common in fine-grained complexity.
\end{abstract}

\section{Introduction}

A trope in popular science fiction is the idea of traveling back in time,
making a change to the world,
then traveling forward in time (usually to the original time)
to observe the effected changes.
This style of time travel (as opposed to stable time loops or multiverses)
was one of the inspirations for retroactive data structures.
Such sequences of events occur in many pieces of media including in the movies
\emph{Back To The Future} (1985),
\emph{Timecop} (1994),
\emph{12 Monkeys} (1995),
\emph{Star Trek: First Contact} (1996),
\emph{The Butterfly Effect} (2004),
\emph{Looper} (2012), and
\emph{Avengers: Endgame} (2019);
% also Time Chasers (1994) but this is a really really bad movie and yikes and I'm very suprised it's not from the 70's or 80s.
TV shows
\emph{Doctor Who} (1963/2005--),
\emph{Heroes} (2006--),
\emph{Marvel's Agents of S.H.I.E.L.D.} (2013--),
% final season 7
\emph{DC's Legends of Tomorrow} (2016--), and
\emph{The Umbrella Academy} (2019--);
anime/manga \emph{Doraemon} (1969/1973--) and \emph{Steins;Gate} (2009/2011--);
% https://scifi.stackexchange.com/questions/227942/time-travel-in-doraemon-was-the-stable-timeloop-effect-ever-really-adverted
short story \emph{A Sound of Thunder} (Ray Bradbury, 1952);
books \emph{The Hitchhiker's Guide to the Galaxy} (Douglas Adams, 1979) and
\emph{Pastwatch: The Redemption of Christopher Columbus}
(Orson Scott Card, 1996);
and video games \emph{Chrono Trigger} (1995) and
\emph{The Legend of Zelda: Ocarina of Time} (1998).

Is this kind of ``jump back, change, jump forward'' or ``Back To The Future''
time travel realistic?
It is inconsistent with known models of physics, but it is nonetheless
interesting to prove inconsistency with other assumptions about the
physical world.
Here we give evidence of such inconsistency under a
\defn{fine-grained physical Church--Turing thesis}:
the universe's physics while traveling forward in time
are implemented by a computer with similar power
(up to polylogarithmic or $n^{o(1)}$ factors)
to standard theoretical computers such as the word RAM,
possibly randomized or quantum,
or possibly a parallel system of such computers.
(The word RAM is the standard in data structures and most of theoretical
computer science, and it can efficiently simulate
e.g.\ an $O(1)$-head $d$-dimensional Turing machine for any~$d$,
making it a good choice of universal model, at least for lower bounds.)
Under this assumption, is time travel \emph{computationally} realistic?

We can turn such questions about computational realism into problems about
data structures, by thinking of the state of the universe as a data structure.
For example, jumping back in time (for viewing only) is essentially asking
for a \emph{partially persistent} universe computation,
while jumping back in time and making a change to branch a new universe
is essentially asking for a \emph{fully persistent} universe computation,
both of which are efficiently possible for essentially any universe computation
(with logarithmic overhead in general, and constant overhead assuming
bounded in-degree which may be reasonable given geometric constraints)
\cite{persistence}.
Returning to ``Back to the Future'' time travel,
making a (blind) change in the past and instantly seeing the cascaded effects
on the present is essentially asking for a \emph{partially retroactive}
universe computation, while observing and changing the past and immediately
updating the future is essentially asking for a \emph{fully retroactive}
universe computation.%
\footnote{For this reason, an early unused name for retroactive data structures
  was ``time-travel data structures'', as evidenced by the keyword list of
  \cite{demaine2007retroactive}.}

\subsection{Our Results}
\label{sec:results}

In this paper, we give new evidence that data structures cannot be made
efficiently retroactive, even partially retroactive,
which complements a prior separation between partial and full retroactivity
\cite{chen2018nearly}.  Specifically,
under any of three standard assumptions in fine-grained complexity,
we prove near-optimality of the trivial ``rollback'' method
for retroactive changes to the past \cite[Theorem~1]{demaine2007retroactive}:
rewind time by undoing operations to when the change should occur,
make the desired change, and then replay the previously undone operations
(by maintaining the current timeline of changes and their inverses).
In the worst case (and assuming no amortization in the input data structure),
this method occurs a multiplicative overhead of $\Theta(m)$
where $m$ is the number of operations in the timeline.
We prove an essentially matching worst-case lower bound of $\Omega(m^{1-o(1)})$
multiplicative overhead under any one of the following assumptions,
which are all common in fine-grained complexity
\cite{williams2018some,chen2018nearly} and were the same assumptions
made by the prior retroactive separation results \cite{chen2018nearly}:
\begin{enumerate}
\item
  CircuitSAT on an $n$-input $2^{o(n)}$-gate circuit
  requires $2^{n-o(n)}$ time.

  This assumption is weaker than the standard
  Strong Exponential Time Hypothesis (SETH) \cite{impagliazzo2001problems}
  which makes a similar statement about formulas
  (which cannot re-use computations like circuits can).
\item
  $(\min,+)$ matrix--vector product with an $n \times n$ integer matrix and
  an online sequence of $n$ integer vectors of length $n$
  requires $n^{3-o(1)}$ time.

  This assumption is weaker than the standard APSP assumption
  that All-Pairs Shortest Paths in an $n$-vertex graph with integer weights
  (equivalent to \emph{offline} $(\min,+)$ matrix--vector product)
  requires $n^{3-o(1)}$ time \cite{henzinger2015unifying}.
\item
  3SUM (given $n$ integers, do any three sum to zero?)\
  requires $n^{2-o(1)}$ time \cite{gajentaan1995class}.

  This assumption is arguably the beginning of fine-grained complexity.
\end{enumerate}
Our lower bounds hold even if the retroactive data structure is amortized,
and apply to randomized algorithms if the assumptions do.

Applied to time travel, our results show that making a change in the past
and then jumping forward to the present should take roughly as much computation
as the universe needed to originally advance through that much time.
In other words, adding ``Back To The Future'' time travel to the universe
requires the universe computation to slow down by a factor nearly linear
in the lifetime of the universe, which seems computationally unrealistic.
Here we assume the fine-grained physical Church--Turing Thesis described above,
and ignore possible speedups from quantum computation
(which remains an area for future research).
If the universe is a parallel system of computers, then
our worst-case instance consists of a proportional number of instances of
our worst-case data structures.
Our result can be seen as justification for the model of time travel
implemented in the video game \emph{Achron} (2011), where changing the past
propagates a wave of consequential changes to the future, but the wave
moves ``slowly'' (a constant factor faster than normal forward time travel).

\subsection{Related Work}

The original paper on retroactive data structures \cite{demaine2007retroactive}
proves an $\Omega(m)$ worst-case lower bound on the multiplicative overhead
required for partial retroactivity, where $m$ is the number of operations in the timeline \cite[Theorem~2]{demaine2007retroactive}.
Asymptotically, this result is an improvement over our result,
matching the $O(m)$ upper bound up to a constant factor
instead of an $m^{o(1)}$ factor.
The difference is in the model: the existing $\Omega(m)$ lower bound holds in
the history-dependent algebraic-computation-tree model,
the integer RAM, and generalized real RAM.
(The reduction is from online polynomial evaluation,
which provably requires linear time in these models.)
But the result is not known to hold on the word RAM, for example,
where the best known unconditional lower bound on retroactivity
is $\Omega(\sqrt{m/\log m})$ \cite[Theorem~3]{demaine2007retroactive}.

By contrast, our results hold wherever the fine-grained assumptions above hold,
which so far seems to include all ``reasonable'' models of computation.
For example, the fastest known algorithm for 3SUM on a word RAM
runs in $O\left(n^2 \big(\frac{\lg \lg n}{\lg n}\big)^2\right)$ time
\cite{Baran-Demaine-Patrascu-2007}.
This bound is roughly a logarithmic factor better than
the fastest known algorithm for 3SUM on a real RAM, which runs in
$O\big(n^2 \, \frac{\lg \lg n}{\lg n}\big)$ time \cite{Freund-2017},
an improvement on the recent subquadratic breakthrough
\cite{Gronlund-Pettie-2018}.
But all of these algorithms are quadratic up to polylogarithmic factors.

On the other hand, in the decision tree model of computation,
3SUM can be solved in $O(n^{3/2} \sqrt{\log n})$ time
\cite{Gronlund-Pettie-2018}, breaking the 3SUM conjecture.
However, this model of computation is generally considered \emph{unreasonable}
because it allows the choice of algorithm to depend on the problem size~$n$.
Further, like most such decision-tree results, the proof is not ``algorithmic'':
the best known algorithm to compute which algorithm to run for a given $n$
uses exponential time. \looseness=-1

Our work is closely based on another paper proving a separation between
partial and full retroactivity \cite{chen2018nearly}.
In particular, our fine-grained complexity assumptions are the same as theirs.
Under these assumptions, Chen et al.~\cite{chen2018nearly} proved that
the worst-case separation between a partially retroactive and a
fully retroactive data structure for the same underlying problem
is a multiplicative factor of $\Omega(m^{1/2-o(1)})$,
which is essentially tight against the known upper bound of $O(\sqrt m)$
\cite{demaine2007retroactive}.

This paper is not the first to consider applications
of computational complexity to time travel.
Aaronson, Bavarian, and Giusteri \cite{Aaronson-Bavarian-Giusteri-2016}
show that consistent timelines from closed timelike curves
(another common model for time travel in popular science fiction)
require solving PSPACE-complete problems.
This result suggests that that this model of time travel,
despite being plausible in existing models of physics,
is not actually computationally feasible.
Our result can be viewed as a complementary negative result
for the ``Back To The Future'' model of time travel.
(However, our result shows only a conjectured linear separation,
whereas \cite{Aaronson-Bavarian-Giusteri-2016} shows a conjectured
exponential separation.)
Both negative results only apply to ``long-distance'' time travel;
making changes in the recent past remains computationally feasible,
leaving ample room for science fiction.
Both results also leave open the possibility that a time traveler
jumping forward simply does not experience time
(their subjective time is frozen) while the universe rolls time forward.
%this would be contrary to the fine-grained physical Church--Turing Thesis,
%but perhaps consistent with ``out-of-time'' characters in science fiction
%such as Doctor Who's Time Lords,
%DC's Time Masters / Time Bureau,
%Marvel's Time Variance Authority / Watchers, and
%The Umbrella Academy's The Commission.

\subsection{Techniques}
\label{sec:techniques}

Our approach builds on previous work by Chen et al.~\cite{chen2018nearly},
which proves an $\tilde \Omega(\sqrt m)$ gap between partial and full
retroactivity under the same fine-grained complexity assumptions,
where $m$ is the number of operations in the data structure's timeline.
This result implies an $\tilde \Omega(\sqrt m)$ worst-case separation
between a basic (nonretroactive) data structure and a fully retroactive
version of that data structure.
Our work improves on this separation in two ways:
it provides a separation between a basic data structure and a
\emph{partially} retroactive version of that data structure,
and it improves the separation from $\tilde \Omega(\sqrt m)$ to
$\tilde \Omega(m)$.

To achieve these improvements, we adapt Chen et al.'s constructions
using the following techniques:
\begin{enumerate}
\item We define the ``lazy'' version of an arbitrary abstract data type (data structure interface), which allows us to move all queries to the end while still preserving their answers, making partial and full retroactivity essentially equivalent. We prove a general result (Theorem~\ref{thm:lazy conversion}) which lets us translate gaps for full retroactivity into gaps for partial retroactivity.
\item We use fewer operations than Chen et al.\ by combining a long sequence of simple operations into one more complicated operation. In particular, Chen et al.'s data structures use one operation to set a single element of a list, while ours set the entire list in a single operation. At the cost of operations being more expensive, this reduces the number of operations performed. As a result, the same gap is larger when considered as a function of the number of operations.
\end{enumerate}
Using Technique~1 alone, applying Theorem~\ref{thm:lazy conversion} to Chen et al.'s results, we obtain an $\tilde \Omega(\sqrt m)$ gap between nonretroactivity and partial retroactivity, which is interesting but not optimal. Using Technique~2 without Technique~1 does not achieve anything new, because Chen et al.\ use a large number of nonconsecutive query operations. To significantly reduce the number of operations, we need to batch the query operations into a single more complicated operation, which our lazy transform makes possible by moving them all to the end.

\subsection{Outline}

In Section~\ref{sec:defs}, we provide the definitions needed for this paper, including fully and partially retroactive data structures, and the three fine-grained complexity assumptions.

In Section~\ref{sec:lazy}, we define lazy data structures, and use them to prove a result that converts gaps for full retroactivity into gaps for partial retroactivity. Applying this to Chen et al.'s work gives a $\Omega(\sqrt m)$ gap between a base data structure and the partial retroactive version, under each of the three fine-grained complexity assumptions.

In the remaining sections, we prove a stronger $\Omega(m)$ slowdown for partial retroactivity under each of the same assumptions. Each proof is based on a construction due to Chen et al.~\cite{chen2018nearly}, with $\lazyop$ applied.
We also apply the second idea described above--merging consecutive operations into a single more complicated operation--which gives the improvement from $\Omega(\sqrt m)$ to $\Omega(m)$.
We are also able to make a few simplifications: first, we do not always need the full power of inserting queries and calling \evaluate, and can use an ADT which can only query some simpler summary of past states. Second, Chen et al.'s lower bound from CircuitSAT is more complicated than necessary.

We prove this $\Omega(m)$ separation under Conjecture~\ref{conj:circuitsat} about CircuitSAT in Section~\ref{sec:circuitsat}, under Conjecture~\ref{conj:online min + product} about Online $(\min,+)$ Product in Section~\ref{sec:online min + product}, and finally under Conjecture~\ref{conj:3sum} about 3SUM in Section~\ref{sec:3sum}.

\section{Definitions}
\label{sec:defs}

In this section, we introduce a simple formalism for data structures and
their problem/interface specifications (ADTs) that lets us to define
general transformations on those specifications.
This enables formal definitions of ``a retroactive version of a
data structure'' as well as our {\lazyop} transformation,
which in turn help us state our results precisely.

\begin{definitionx}[ADT]
	\label{def:ADT}
	An \defn{abstract data type} (\defn{ADT}) $\ADT$ consists of
	\begin{itemize}
		\item A set $\mathcal U$ of \defn{update} operations, where each $u \in \mathcal U$ takes an argument in $\domain(u)$ but does not return a value. (Instead, each update influences the results of future queries.)
		\item A set $\mathcal Q$ of \defn{query} operations, which each $q \in \mathcal Q$ takes an argument in $\domain(q)$ and returns a value in $\codomain(q)$. (Queries cannot affect the result of future queries.)
		\item A partial function $\mathcal A$ specifying the result of a query operation given the sequence of prior calls to update operations. That is, given a sequence $\hat u=[u_1(x_1), \dots, u_k(x_k)]$ of calls to update operations and a call to a query operation $q(x)$, the ADT may specify the result $\ADT(\hat u, q(x))$ of $q(x)$ after calling precisely the operations in $\hat u$. Note that the result of $q(x)$ can depend on prior update operations, but not on prior query operations. Some behavior may be left \defn{undefined} (meaning that any result is valid).%
    \footnote{For convenience in future ADT definitions, we allow a data structure to output anything when behavior is not defined, but in all of our results it is easy to test for undefined behavior, and thus one could modify the ADTs we define to have no undefined behavior by defining a ``default'' return value, without affecting our results.}
	\end{itemize}
\end{definitionx}

For instance, the \textsc{Dictionary} ADT has two update operations \op{insert} and \op{delete}, and one query operation \op{contains}, all of which take an integer as an argument. The desired behavior is that \op{contains}$(x)$ returns True if \op{insert}$(x)$ has been called more recently than $\op{delete}(x)$, and False otherwise. This ADT represents a set which starts empty, and can have elements inserted and deleted.

\begin{definitionx}[Data structure, implementation]
	A \defn{data structure} consists of some internal storage (e.g., a pointer machine or word RAM), and some algorithms acting on this internal storage, starting from a specified initial state. A data structure \defn{implements} an ADT $\ADT$ if the algorithms are labelled with the operations of $\ADT$, and have the correct behavior:
	suppose we run a sequence of operation calls on the data structure, ending in a query $q(x)$. Let $\hat u$ be the sequence of these operations that are updates. Then the final query must return $\ADT(\hat u, q(x))$, if this value is defined.%
\end{definitionx}

An ADT may have many different implementations; for example, the \textsc{Dictionary} ADT can be implemented by hash tables or balanced search trees. The usual goal of data structure design is to invent faster implementations for a given ADT.

\subsection{Retroactivity}

Now we can define partial and full retroactivity as transformations on ADTs.

\begin{definitionx}[Partially retroactive]
	Let $\ADT$ be an ADT. We define an ADT called \defn{partially retroactive $\ADT$}, written $\partially{\ADT}$. It has the following operations:
	\begin{itemize}
		\item For each update operation $u$ of $\ADT$, an update operation $\Insert{u}$ with domain $\mathbb N \times \domain(u)$.%
		\item An update operation $\Delete$ with domain $\mathbb N$.
		\item The same query operations as $\ADT$.
	\end{itemize}
	To describe the expected behavior, we imagine that a data structure maintains a sequence $\hat u$ of calls to update operations of $\ADT$ (the ``timeline''). Then $\Insert{u}(k,x)$ means we insert $u(x)$ at position $k$ in this list, and $\Delete(k)$ means we remove the call at position $k$. A call to query operation $q(x)$ in $\partially{\ADT}$ should return $\ADT(\hat u, q(x))$ if that value is defined; that is, we evaluate $q(x)$ at the end of the sequence of updates, which is thought of as the ``present''.
\end{definitionx}

\begin{definitionx}[Fully retroactive]
	Let $\ADT$ be an ADT. We define an ADT called \defn{fully retroactive $\ADT$}, written $\fully{\ADT}$. It has the following operations:
	\begin{itemize}
		\item For each update operation $u$ of $\ADT$, an update operation $\Insert{u}$ with domain $\mathbb N \times \domain(u)$.%
		\item An update operation $\Delete$ with domain $\mathbb N$.
		\item For each query operation $q$ of $\ADT$, a query operation $\Query{q}$ with domain $\mathbb N\times \domain(q)$ and the same codomain as~$q$.
	\end{itemize}
	As with partially retroactive $\ADT$, we imagine that a data structure maintains a sequence $\hat u$ of calls to update operations of $\ADT$. 
	The meanings of $\Insert{u}$ and $\Delete$ are the same as above. 
	A call $\Query{q}(k,x)$ in $\fully{\ADT}$ should return $\ADT(\hat u_{1\cdots k}, q(x))$ if that value is defined, where $\hat u_{1\cdots k}$ is the first $k$ elements of $\hat u$; that is, it should give the result of calling $q(x)$ after time $k$ in the timeline.
\end{definitionx}

\begin{lemma}[{\cite[Theorem~1]{demaine2007retroactive}}]
  \label{lem:rollback}
  Suppose $\ADT$ has an implementation in which each operation takes $O(T)$ time.
  Then $\partially{\ADT}$ and $\fully{\ADT}$ have implementations
  in which each operation takes $O(m \, T)$ time,
  where $m$ denotes the number of calls to update operations in the timeline.
  Furthermore, the $\partially{\ADT}$ implementation can support
  query operations in the same time bound as~$\ADT$.
\end{lemma}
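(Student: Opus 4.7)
The plan is to implement both $\partially{\ADT}$ and $\fully{\ADT}$ by the rollback method: keep the timeline explicitly, and whenever its contents change (or we need a state at a particular historical moment), rebuild the required $\ADT$ state from scratch by replaying the timeline. Concretely, I would maintain (i) the timeline $\hat u = [u_1(x_1),\dots,u_m(x_m)]$ stored in a list-like structure indexed by position (for instance, a balanced BST or just an array, since we only need to keep within the stated $O(mT)$ bound), and (ii) a working copy $D$ of an implementation of $\ADT$, always reset to the initial state of the implementation before replaying. The initial state is fixed, so starting a fresh copy of $D$ costs $O(1)$ time plus whatever the implementation spends on operation calls.

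For $\partially{\ADT}$, I would handle $\Insert{u}(k,x)$ and $\Delete(k)$ by first updating the timeline list at position $k$, then reinitializing $D$ and replaying all $m$ updates in order, and finally keeping $D$ around as the ``present'' state. Each such operation costs $O(m \cdot T)$: $O(m)$ for list maintenance and $O(mT)$ for the $m$ update calls. Queries do not modify the timeline, so I would simply run $q(x)$ on the cached present state $D$, paying $O(T)$; this is exactly the additional claim in the lemma. For $\fully{\ADT}$, update insertions and deletions are handled identically (the cached $D$ still represents the end of the timeline). For a historical query $\Query{q}(k,x)$, I would reinitialize a fresh copy $D'$, replay $\hat u_{1\cdots k}$ into it, and then run $q(x)$; this takes $O(k \cdot T) = O(m \cdot T)$.

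Correctness follows directly from the definition of an implementation of $\ADT$: after replaying the prefix $\hat u_{1\cdots k}$, a query $q(x)$ returns $\ADT(\hat u_{1\cdots k},q(x))$ whenever that value is defined, which is exactly the semantics required by $\partially{\ADT}$ (with $k=m$) and $\fully{\ADT}$. No assumption about invertibility or checkpointing of $\ADT$ is needed, because we always rebuild from the initial state rather than ``undoing'' operations.

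The only mildly delicate point is accounting: we must ensure that list surgery at position $k$, plus the cost of reinitializing $D$, does not blow past $O(mT)$. Both are easy: list insertion/deletion at a given index is $O(m)$ (or $O(\log m)$ with a balanced BST), and we treat initialization of an $\ADT$ implementation as a constant (or at worst $O(T)$) cost baked into the model, as is standard. So no step is a real obstacle; the lemma is essentially a bookkeeping argument, and the main ``content'' is simply observing that the rebuild-from-scratch cost is $O(m)$ update calls and therefore $O(mT)$ time.
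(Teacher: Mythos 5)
Your proposal is correct, but it uses a different mechanism than the paper. The paper's proof is the classical rollback method: every operation records the changes it makes to the internal storage in a stack, a retroactive modification at position $k$ is handled by popping and \emph{undoing} the recorded changes of the operations after position $k$, applying the new (or deleting the old) update, and then re-applying the undone operations. You instead rebuild from scratch: keep the timeline explicitly and, on any retroactive change, replay all $m$ updates into a freshly initialized copy of the $\ADT$ implementation, caching the resulting ``present'' state so that partially retroactive queries run in $O(T)$, and replaying a length-$k$ prefix for a fully retroactive $\Query{q}(k,x)$. Both give the claimed $O(m\,T)$ per operation and the $O(T)$ present-time queries, so your argument establishes the lemma. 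The trade-offs: the paper's undo/redo only touches the suffix after the modification point (cost $O((m-k)T)$ rather than always $\Theta(mT)$) and never re-executes the prefix, but it needs the implementation's memory writes to be logged so they can be reversed; your replay needs no undo logs or reversibility at all, at the price of always paying the full replay and of needing cheap reinitialization of the data structure --- a point you wave at but which deserves one sentence of care (e.g., note that the memory touched during a replay is $O(mT)$ cells, so resetting it, or using lazy initialization or a fresh memory region, stays within the bound). With that small remark added, your write-up is a complete and self-contained proof of the statement.
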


\begin{proof}
Both partial and full retroactivity can be achieved by the simple rollback method in which all operations and the changes they make to the data structure's internal storage get recorded in a stack. These operations can then be ``rolled back'', undoing their effect; the new update operation applied at the appropriate location; and then all rolled back operations re-applied.
\end{proof}

\subsection{Complexity Assumptions}
\label{sec:Complexity Assumptions}

Our results rely on the same computational complexity assumptions
used by Chen et al.~\cite{chen2018nearly}
to show their gap between partial and full retroactivity.
These assumptions are implied by (i.e. are weaker than) the three main conjectures
used in fine-grained complexity --- SETH, the APSP Conjecture, and
the 3SUM Conjecture~\cite{williams2018some,chen2018nearly} --- respectively. 

\begin{conjecturex}\label{conj:circuitsat}
	Time $2^{n-o(n)}$ is required to solve \defn{SIZE$(2^{o(n)})$ CircuitSAT}: given an $n$-input circuit of size $2^{o(n)}$, decide whether it is satisfiable.
\end{conjecturex}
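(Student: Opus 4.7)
The final statement is a hardness assumption rather than a theorem, so strictly speaking there is nothing to prove; the natural task is instead to justify why it is a reasonable assumption to adopt. My plan is therefore to record the reduction hinted at by the author's parenthetical remark, showing that Conjecture~\ref{conj:circuitsat} is implied by the Strong Exponential Time Hypothesis. This is what gives the conjecture its status as a ``weaker than SETH'' hypothesis and lets us cite it freely in Section~\ref{sec:circuitsat}.

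First, I would recall SETH in its standard form: for every $\epsilon > 0$ there exists a constant $k$ such that $k$-SAT on $n$ variables admits no $O(2^{(1-\epsilon) n})$-time algorithm. Next, I would note that a $k$-SAT instance is a $k$-CNF on $n$ variables with at most $O(n^k)$ clauses, hence encodable as a Boolean formula of size $O(n^k) = 2^{o(n)}$. A Boolean formula is the special case of a Boolean circuit in which every internal gate has fan-out $1$, so this encoding is automatically a SIZE$(2^{o(n)})$ circuit on $n$ inputs. Finally, I would take the contrapositive: any algorithm refuting Conjecture~\ref{conj:circuitsat} solves SIZE$(2^{o(n)})$ CircuitSAT in time faster than $2^{n - o(n)}$, which in turn beats $O(2^{(1-\epsilon) n})$ for some $\epsilon > 0$; applied to the embedded $k$-SAT instances for each constant $k$, this refutes SETH.

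The only step requiring care is the asymptotic bookkeeping that converts ``not a $2^{n - o(n)}$ lower bound'' into ``an $O(2^{(1-\epsilon) n})$ upper bound for some $\epsilon > 0$'' on a sufficient set of input lengths; the negation holds only on an infinite subsequence, so a padding argument is needed to produce an $O(2^{(1-\epsilon) n})$ algorithm for every input length of the $k$-SAT problem. This is routine and not a real obstacle. The conceptual content is simply that circuits can re-use subcomputations through shared gates while formulas cannot, so CircuitSAT is at least as hard as FormulaSAT, and therefore any lower bound we would like to assume for FormulaSAT (namely SETH) forces a matching lower bound on CircuitSAT --- precisely the content of the conjecture.
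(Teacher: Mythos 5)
You are right that this statement is a conjecture the paper merely assumes, and your justification --- that $k$-CNFs embed as $2^{o(n)}$-size circuits, so SETH implies the conjecture --- is exactly the content of the paper's one-sentence remark following it (circuits generalize formulas, hence the assumption is weaker than SETH). Your elaboration of the contrapositive and the padding caveat is correct and consistent with the paper; there is nothing further to prove.
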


Boolean circuits are more general structures than Boolean formulas, making this conjecture weaker, and thus more believable than, the foundational Strong Exponential Time Hypothesis (SETH)~\cite{impagliazzo2001problems}.

\begin{conjecturex}\label{conj:online min + product}
	Time $n^{3-o(1)}$ is required to solve \defn{Online $(\min,+)$ Product}: given an $n\times n$ integer matrix $A$, and $n$ vectors $v_1,\dots,v_n$ which are revealed one by one, compute each $(\min,+)$ product $A\diamond v_i$. 
	The next vector $v_{i+1}$ is revealed after outputting $A\diamond v_i$. 
	The $(\min,+)$ product $A\diamond v$ is an $n$-component vector whose $j$th component is defined as
	$$(A\diamond v)_j = \min_{k=1}^n(A_{j,k}+v_k).$$
\end{conjecturex}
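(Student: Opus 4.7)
The plan is to justify Conjecture~\ref{conj:online min + product} in the only way the current state of fine-grained complexity permits: by deriving it from a stronger, well-accepted hypothesis. The natural target is the conditional implication that the APSP conjecture---already invoked informally in the introduction---implies the online matrix--vector statement. Since the APSP conjecture is equivalent to the statement that $(\min,+)$ matrix--matrix product on $n \times n$ integer matrices requires $n^{3-o(1)}$ time, the task reduces to giving a clean reduction from offline $(\min,+)$ matrix--matrix product to online $(\min,+)$ matrix--vector product.

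First I would restate APSP as computing $A \diamond B$ column by column, where the $i$th column equals $A \diamond v_i$ for $v_i$ the $i$th column of $B$; this is immediate from the definition of $\diamond$. Second I would describe the reduction: suppose $\mathcal O$ is a hypothetical algorithm solving the online problem in total time $T(n)$. Feed $\mathcal O$ the matrix $A$, then present the columns of $B$ in order as $v_1, \ldots, v_n$, collecting each returned $A \diamond v_i$ as a column of $A \diamond B$. The online ``output before seeing the next vector'' constraint causes no issue, because the reduction itself controls when vectors are revealed. Third I would conclude by contrapositive: if $T(n) = o(n^{3-o(1)})$, then APSP also admits an $o(n^{3-o(1)})$-time algorithm, contradicting the APSP conjecture.

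The only routine verifications are that the $n$ per-query outputs really do assemble into the correct $(\min,+)$ matrix--matrix product (immediate from the definition of $\diamond$), and that the total online runtime legitimately upper-bounds the offline cost (additivity of per-query running times, plus $O(n^2)$ bookkeeping to read off columns of $B$ and write columns of the output).

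The main obstacle is philosophical rather than technical. An \emph{unconditional} proof of Conjecture~\ref{conj:online min + product} is beyond current techniques: it would immediately yield an unconditional $n^{3-o(1)}$ lower bound for APSP, one of the central open problems of fine-grained complexity. What my proposal realistically achieves is therefore not absolute hardness but the correct position of the assumption in the existing hierarchy---showing it is implied by, and hence at least as plausible as, the APSP conjecture on which essentially all $(\min,+)$-based fine-grained lower bounds rest.
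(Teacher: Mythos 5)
This statement is a \emph{conjecture} that the paper never proves: its only justification is the surrounding remark that the online assumption is weaker than the APSP conjecture, via the known equivalence of APSP with offline $(\min,+)$ product (citing Henzinger et al.) and the trivial fact that an online matrix--vector algorithm solves the offline matrix--matrix problem column by column. Your proposal correctly recognizes that an unconditional proof is out of reach and supplies exactly this standard column-feeding reduction establishing the implication from APSP, so it is correct as a conditional justification and takes essentially the same stance as the paper.
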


The offline (and thus easier) version of Online $(\min,+)$ Product
\cite{henzinger2015unifying} is equivalent to the well-known all-pairs shortest path problem, which is commonly assumed hard in fine-grained complexity. Online $(\min,+)$ Product is also a generalization of Online Boolean Matrix--Vector Product, another problem commonly used in fine-grained complexity and not currently known to be equivalent to APSP~\cite{henzinger2015unifying}.

\begin{conjecturex}\label{conj:3sum}
	Time $n^{2-o(1)}$ is required to solve \defn{3SUM}:
  given three size-$n$ sets $A$, $B$, and $C$ of integers,
  %in $[-n^q,n^q]$ for some constant $q$,
  decide whether there exists $(a,b,c)\in A\times B\times C$ such that $a+b+c=0$.
\end{conjecturex}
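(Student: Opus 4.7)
Conjecture~\ref{conj:3sum} is a foundational open problem in fine-grained complexity; unconditionally proving it would entail superlinear circuit lower bounds for an explicit problem, which is well beyond current techniques. My ``proof'' plan is therefore not to attempt a direct argument but to lay out the standard circumstantial case, by mapping out (i) what a lower bound must avoid, (ii) why the natural approach of inheriting hardness from a stronger hypothesis fails, and (iii) which model-dependent lower bounds already exist.

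First I would calibrate the slack in the statement by recording the best known upper bounds. On the real RAM, \cite{Gronlund-Pettie-2018} and \cite{Freund-2017} save a $\lg n / \lg\lg n$ factor over the quadratic baseline; on the word RAM, \cite{Baran-Demaine-Patrascu-2007} saves $(\lg n / \lg\lg n)^2$. This is exactly why the conjecture is stated with $n^{2-o(1)}$ rather than $\Omega(n^2)$: any proof must leave room for these polylogarithmic shavings, and any proof strategy that rules them out is automatically wrong.

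Next I would hope to inherit hardness from a stronger assumption via a fine-grained reduction. The natural candidates are SETH and the APSP conjecture, whose analogues are captured here by Conjecture~\ref{conj:circuitsat} and Conjecture~\ref{conj:online min + product}. Unfortunately, this is precisely where the obstacle lies: essentially every known fine-grained reduction involving 3SUM points \emph{from} 3SUM to other problems, and reducing SETH or APSP to 3SUM is itself a well-known open question. So this avenue, while the most appealing strategically, collapses into another open problem of the same flavor as Conjecture~\ref{conj:3sum} itself.

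The main obstacle, then, is the absence of any unconditional quadratic lower bound in the word RAM for an explicit problem, compounded by the fact that in the (admittedly unreasonable) decision-tree model 3SUM is already subquadratic \cite{Gronlund-Pettie-2018}, so any prospective proof must genuinely exploit uniform or word-RAM structure rather than combinatorial arguments that transfer to the decision-tree setting. In lieu of a proof, I would treat Conjecture~\ref{conj:3sum} as a working hypothesis, following \cite{williams2018some,chen2018nearly}, and use it only as an input to the retroactivity lower bounds derived in later sections of the paper.
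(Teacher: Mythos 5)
This statement is a conjecture, not a theorem: the paper offers no proof of it, treating it exactly as you do --- as a standard fine-grained hardness assumption (following \cite{gajentaan1995class,williams2018some,chen2018nearly}) whose only supporting discussion is the surrounding context of known algorithms and the failure of reductions from SETH/APSP, so your treatment matches the paper's. The one substantive point the paper does make that you omit is that the stated three-set form is really 3SUM$'$ and is equivalent to the classical one-set 3SUM via a simple linear-time reduction \cite[Theorem~3.1]{gajentaan1995class}, which is what licenses using this formulation interchangeably with the version stated in Section~\ref{sec:results}.
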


The 3SUM conjecture was one of the first and remains one of the main conjectures
in fine-grained complexity~\cite{gajentaan1995class}.
The version of 3SUM stated in Conjecture~\ref{conj:3sum} is actually called
3SUM$'$ in \cite{gajentaan1995class},
while Section~\ref{sec:results} states the original 3SUM problem.
But there is a simple linear-time reduction between 3SUM and 3SUM$'$
\cite[Theorem~3.1]{gajentaan1995class},
so these two versions of the 3SUM conjecture are equivalent.
We use the version stated in Conjecture~\ref{conj:3sum} in our construction.

\section{Transforming Partially Retroactive Transformations into Fully Retroactive Transformations}\label{sec:lazy}

In this section, we develop the first technique mentioned in Section~\ref{sec:techniques}. 
First we define a ``lazy'' version of an abstract data type (Definition~\ref{def:ADT}) $\ADT$. 
Roughly speaking, $\lazy{\ADT}$ makes the queries of $\ADT$ record their value but not return anything, and adds a new query operation $\evaluate$ to extract the recorded result of a query. This transformation allows us to simulate full retroactivity using partial retroactivity: insert a ``query'' that is actually an update operation at some point in the past, and then call $\evaluate$ in the present to read its result. This is the idea behind the main result of this section, Theorem~\ref{thm:lazy conversion}.

\begin{definitionx}[Lazy]
	Let $\ADT$ be an ADT. We define an ADT called \defn{lazy $\ADT$}, written $\lazy{\ADT}$. It has the following operations:
	\begin{itemize}
		\item The same update operations as $\ADT$.
		\item For each query operation $q$ of $\ADT$, an \emph{update} operation $q^\dagger$ with the same domain as~$q$.
		\item A query operation $\evaluate$ with domain $\mathbb N$ and codomain $\displaystyle \bigsqcup_{\text{query }q\text{ of }\ADT} \codomain(q)$.%
		%\footnote{For typing purposes, we assume the codomains of queries are disjoint, which can be arranged by labeling a return value with what query type produced it.}
	\end{itemize}
	Suppose $\hat u^\dagger$ is the sequence of calls to update operations of $\lazy{\ADT}$, and suppose the $k$th call $u^\dagger_k$ is a modified query $q^\dagger(x)$.
	Then $\evaluate(k)$ is supposed to return what $q(x)$ would return in $\ADT$ in the corresponding sequence of calls.
	Formally, let $\hat u_{1\cdots k}$ be the sequence of update calls not of the form $q^\dagger(x)$ from the first $k$ elements of $\hat u^\dagger$, so $\hat u_{1\cdots k}$ is a sequence of calls to update operations of $\ADT$.
	Then $\lazy{\ADT}(\hat u^\dagger, \evaluate(k)) = \ADT(\hat u_{1\cdots k}, q(x))$, provided this is defined.
	If the $k$th call $u^\dagger_k$ is not of the form $q^\dagger(x)$, then the result of $\evaluate(k)$ is undefined.
\end{definitionx}

\begin{lemma}
	\label{lem:lazy transformation}
	Suppose a list of $m$ items can be maintained subject to looking up the $i$th item, and appending or removing a new item at the end, in $O(\ell(m))$ time per operation. (For word-RAM machines, $\ell(m)=1$; for pointer machines, $\ell(m)=\log m$.)

	Suppose $\ADT$ has an implementation in which each operation $f$ takes $O(T_f)$ time.
  Then $\lazy{\ADT}$ has an implementation in which $\evaluate$ takes $O(\ell(m))$ time, and each other operation $f$ or $f^\dagger$ takes $O(T_f+\ell(m))$ time, where $m$ is the total number of calls to operations.
\end{lemma}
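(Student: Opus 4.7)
\medskip

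\noindent\textbf{Proof plan.} The plan is to implement $\lazy{\ADT}$ by piggybacking on the given implementation of $\ADT$ together with one auxiliary list, doing the ``work'' of each lazy query $q^\dagger$ at the moment it is inserted and simply stashing its answer for later retrieval by $\evaluate$.

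Concretely, I would maintain two pieces of state: (i) an instance $D$ of the given implementation of $\ADT$, and (ii) a list $L$ indexed by the position $k$ in the sequence $\hat u^\dagger$ of calls to update operations of $\lazy{\ADT}$, where $L$ is realized by the list data structure hypothesized in the lemma (so append/lookup cost $O(\ell(m))$). I would handle the three cases of an incoming operation as follows. For a genuine update $f(x)$ of $\ADT$ (the $k$-th update of $\lazy{\ADT}$), invoke $f(x)$ on $D$ at cost $O(T_f)$ and append a null marker to $L$ at position $k$ at cost $O(\ell(m))$. For a lazy query $q^\dagger(x)$ (the $k$-th update), run $q(x)$ on $D$ at cost $O(T_q)$ and append the returned value to $L$ at position $k$ at cost $O(\ell(m))$; note that since $q$ is a query of $\ADT$ it does not alter $D$. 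For a query $\evaluate(k)$, look up $L[k]$ and return it at cost $O(\ell(m))$. This gives exactly the time bounds claimed.

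For correctness I would argue the following invariant: immediately before handling the $k$-th element of $\hat u^\dagger$, the state of $D$ equals the state obtained by applying the subsequence $\hat u_{1\cdots k}$ of proper $\ADT$-updates among the first $k{-}1$ elements of $\hat u^\dagger$. This is immediate by induction, since only genuine updates change $D$ and we apply them exactly in the order in which they appear in $\hat u^\dagger$. Hence, when the $k$-th element is a lazy query $q^\dagger(x)$, running $q(x)$ on $D$ yields $\ADT(\hat u_{1\cdots k}, q(x))$ by the definition of an implementation of $\ADT$; this is precisely the value that the specification of $\lazy{\ADT}$ assigns to $\evaluate(k)$, so storing it in $L[k]$ and returning it later is correct. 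If the $k$-th element is not of the form $q^\dagger(x)$, the behaviour of $\evaluate(k)$ is unconstrained (undefined), so returning the null marker is fine.

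The construction is essentially an exercise in bookkeeping, so there is no real conceptual obstacle; the only place to be careful is the off-by-one in the invariant above, and in distinguishing the two uses of the index $k$ (position in $\hat u^\dagger$ vs.\ position in the filtered sequence $\hat u_{1\cdots k}$). These are handled by processing operations strictly in input order and relying on the fact that the ``current'' $q^\dagger(x)$ is filtered out of $\hat u_{1\cdots k}$ while leaving $D$'s state unchanged, so that the value computed by $q$ on $D$ matches the specification exactly.
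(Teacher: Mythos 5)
Your proposal is correct and is essentially the paper's own proof: compute each lazy query $q^\dagger(x)$ eagerly by running $q$ on the maintained $\ADT$ instance and store the answer in an auxiliary list indexed by update position, so that $\evaluate(k)$ is a single list lookup costing $O(\ell(m))$. The extra detail you supply (the padding of the list with null markers for genuine updates, and the invariant handling the fact that the current $q^\dagger$ is excluded from $\hat u_{1\cdots k}$ while leaving the state unchanged) is just an explicit version of the bookkeeping the paper leaves implicit.
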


\begin{proof}
To achieve $\lazy{\ADT}$ with the desired running times, we create an auxiliary list to store results of queries.
For a modified query call $q^\dagger(x)$ occurring as the $i$th update call, we store the corresponding result of $q(x)$ in position~$i$.
Then $\evaluate(i)$ can simply look up the value stored at position~$i$.
The $O(\ell(m))$ overhead comes from maintaining and accessing this list.
\end{proof}

Composing Lemmas~\ref{lem:rollback} and~\ref{lem:lazy transformation}, we obtain an implementation of $\partially{\lazy{\ADT}}$ in which $\evaluate$ takes time $O(\ell(m))$, and each other operation takes $O(m(T+\ell(m))$ time. We now show how a partially retroactive version of $\lazy{\ADT}$ is able to efficiently simulate a fully retroactive version of $\ADT$. The key idea is that the lazy version of a data structure has effectively converted the queries into update operations which are then allowed to be inserted at different times in the partially retroactive model.

\begin{theorem}
	\label{thm:lazy conversion}
	If $\partially{\lazy{\ADT}}$ has an implementation in which each operation takes amortized time $O(T)$, then so does $\fully{\ADT}$.
\end{theorem}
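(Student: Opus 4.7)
The plan is to simulate $\fully{\ADT}$ by maintaining an internal instance $D$ of $\partially{\lazy{\ADT}}$, with the invariant that \emph{between} $\fully{\ADT}$ operations the timeline of $D$ contains exactly the current sequence of $\ADT$-updates held by the simulated $\fully{\ADT}$ timeline (and in particular no $q^\dagger$ calls). Under this invariant, positions in $D$ agree with positions in $\fully{\ADT}$, so $\Insert{u}(k,x)$ and $\Delete(k)$ on $\fully{\ADT}$ translate directly to the same calls on $D$. A query $\Query{q}(k,x)$ is simulated by three calls on $D$: call $\Insert{q^\dagger}(k+1,x)$ to temporarily splice in $q^\dagger(x)$ just after the $k$th $\ADT$-update, then $\evaluate(k+1)$ to read off the recorded value, then $\Delete(k+1)$ to remove the temporary insertion and restore the invariant.

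To verify correctness of the query simulation I would unwind the definition of $\lazy{\ADT}$. After the temporary $\Insert{q^\dagger}(k+1,x)$, the $(k+1)$st element of the lazy timeline is $q^\dagger(x)$, while positions $1,\dots,k$ are exactly the first $k$ simulated $\ADT$-updates in order. The filtered prefix $\hat u_{1\cdots k+1}$ therefore equals the first $k$ $\ADT$-updates, and the lazy specification says $\evaluate(k+1)$ returns $\ADT(\hat u_{1\cdots k+1},q(x))$, which matches $\ADT(\hat u_{1\cdots k},q(x))$ -- the required output of $\Query{q}(k,x)$. Correctness of all subsequent $\fully{\ADT}$ operations follows from the fact that the trailing $\Delete(k+1)$ restores $D$ to its pre-query state, so the invariant still holds.

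For the time bound, each $\fully{\ADT}$ operation triggers at most three operations on $D$. Hence if each $D$-operation runs in amortized $O(T)$ time -- i.e.\ any sequence of $N$ operations on $D$ takes total time $O(NT)$ -- then any sequence of $N$ operations on the simulated $\fully{\ADT}$ triggers at most $3N$ operations on $D$ and therefore runs in total time $O(NT)$, yielding the claimed amortized $O(T)$ bound.

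The main thing to be careful about is the position arithmetic: the $+1$ shift exists precisely because the lazy semantics filter out the very $q^\dagger$ call whose evaluation is being requested, so a lazy-prefix of length $k+1$ with $q^\dagger$ last corresponds to an $\ADT$-prefix of length $k$. Beyond that, the argument is a direct simulation, and the only remaining bookkeeping is to confirm that inserting and immediately deleting the temporary $q^\dagger$ never corrupts indices used by later operations -- which it cannot, since the net effect on $D$'s timeline is the identity.
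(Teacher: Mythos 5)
Your proposal is correct and follows essentially the same argument as the paper: simulate each $\Query{q}(k,x)$ by the triple $\Insert{q^\dagger}(k+1,x)$, $\evaluate(k+1)$, $\Delete(k+1)$ on the partially retroactive lazy structure, passing other operations through unchanged, with constant (at most threefold) overhead preserving the amortized $O(T)$ bound. The position arithmetic and the invariant that no $q^\dagger$ calls persist in the timeline match the paper's reasoning exactly.
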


\begin{proof}
	We will use the implementation $\DS$ of $\partially{\lazy{\ADT}}$ to implement $\Insert{u}$, $\Delete$, and $\Query{q}$ with constant overhead. 
	The operations available to $\DS$ are $\Insert{u}$, $\Insert{q^\dagger}$, \Delete, and \evaluate. The collision of notation will not be a problem because the functions that share a name are handled by calling their namesake.

	Calls to $\Insert{u}$ and $\Delete$ can simply be run on $\DS$. On $\Query{q}(k,x)$,
	\begin{enumerate}
		\item Call $\Insert{q^\dagger}(k+1,x)$.
		\item Call $\evaluate(k+1)$.
		\item Call $\Delete(k+1)$.
		\item Output the result from step 2.
	\end{enumerate}
	This new data structure uses $\DS$ to maintain a list $\hat u^\dagger$ of calls to update operations of $\lazy{\ADT}$, but we make sure to immediately remove any calls to $q^\dagger$. Thus it equivalently maintains the list $\hat u$ of update calls to $\ADT$.

	When we call $\Query{q}(k,x)$, we insert the operation $q^\dagger(x)$ into $\hat u^\dagger$ at position $k+1$, and then use $\evaluate$ to extract the result of the corresponding $q(x)$ in $\hat u$. More formally, the result is
	$\lazy{\ADT}(\hat u^\dagger, \evaluate(k+1))$ (where $\hat u^\dagger$ is taken in the middle of the call, so it includes the $q^\dagger$), which by the definition of $\lazyop$ is $\ADT(\hat u_{1\cdots k}, q(x))$, as desired.
\end{proof}

This result allows us to transform a separation between $\ADT$ and $\fully{ADT}$ into a separation between $\lazy{\ADT}$ and $\partially{\lazy{\ADT}}$.
In particular, suppose $\ADT$ has an $O(f)$ implementation but $\fully{\ADT}$ requires $\Omega(g)$. Then by Lemma~\ref{lem:lazy transformation}, $\lazy{\ADT}$ is $O(f+\ell(m))$, and by the contrapositive of Theorem~\ref{thm:lazy conversion}, $\partially{\lazy{\ADT}}$ requires $\Omega(g)$. That is, a slowdown for fully retroactive $\ADT$ implies a slowdown for partially retroactive $\lazy{\ADT}$. If $\ell(m)=1$ such as for the word-RAM model, this slowdown is as large as the original.

Assuming any one of Conjectures~\ref{conj:circuitsat}--\ref{conj:3sum},
Chen et al.~\cite{chen2018nearly} define an ADT $\ADT$
with an $O(f)$ implementation and prove a lower bound of
$O(m^{1/2-o(1)} f)$ on implementing $\fully{\ADT}$.
Assuming $\ell(m)=m^{o(1)}$, the generic result of Theorem~\ref{thm:lazy conversion} implies that $\lazy{\ADT}$ has an $O(f)$-time implementation but $\partially{\lazy{\ADT}}$ requires $\Omega(m^{1/2-o(1)} f)$, giving an $\Omega(m^{1/2-o(1)})$ separation between an ADT and its partially retroactive version under each of the three assumptions.
In Sections \ref{sec:circuitsat}, \ref{sec:online min + product}, and \ref{sec:3sum}, we improve the separation to $m^{1-o(1)}$ when assuming Conjecture~\ref{conj:circuitsat}, \ref{conj:online min + product}, and~\ref{conj:3sum} respectively.

\section{\boldmath Lower Bound from SIZE$(2^{o(n)})$ CircuitSAT}\label{sec:circuitsat}

In this section, we prove a conditional gap for partial retroactivity using the ADT \defn{Circuit Counter}, which has the following operations and behavior:
\begin{itemize}
	\item \op{initialize}$(C)$: given a description of a circuit $C$ of size $r=2^{o(n)}$ which takes $n$ inputs, remember $C$. This can only be called once, and must be the first operation (otherwise the behavior is undefined). We assume $r>n$ for convenience.
	\item \op{set}$(x)$: given an $n$-bit string $x$, set the \defn{current string} to $x$.
	\item \op{increment}$()$: increment the current string as a binary number.
	\item \op{query}$()$: output True if \emph{any} past value of the current string satisfies $C$, and False otherwise.
\end{itemize}

\begin{lemma}\label{lem:circuitsat easy}
	There is an implementation of Circuit Counter in which each operation takes $2^{o(n)}$ time.
\end{lemma}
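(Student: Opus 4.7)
The plan is to maintain, alongside the current $n$-bit string, a single boolean flag \textsf{ever\_sat} that records whether $C$ has evaluated to True on the current string at any point in its history. Since the only way the current string changes is through \op{set} or \op{increment}, we can keep \textsf{ever\_sat} up to date eagerly: whenever one of these operations fires, we evaluate $C$ on the new value of the current string and OR the result into \textsf{ever\_sat}. Then \op{query}{} just returns the flag.

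Concretely, on \op{initialize}$(C)$ we store a representation of $C$ (the $r = 2^{o(n)}$ gates plus their wiring), initialize \textsf{ever\_sat} to False, and set the current string to some default (say, all zeros). On \op{set}$(x)$ we copy $x$ into the current string in $O(n)$ time and then evaluate $C$ in $O(r)$ time by a topological pass through the gates, updating \textsf{ever\_sat}. On \op{increment}$()$ we perform binary increment on the current string in $O(n)$ worst-case time, then again evaluate $C$ in $O(r)$ time and OR into \textsf{ever\_sat}. On \op{query}$()$ we return \textsf{ever\_sat} in $O(1)$ time.

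Since $r = 2^{o(n)}$ and we assumed $r > n$, every operation runs in $O(r) = 2^{o(n)}$ time, as required. The only step that requires any real argument is that evaluating a size-$r$ circuit on a fixed input takes $O(r)$ time on the word RAM: we topologically sort the gates once at \op{initialize}{} time (folded into the $O(r)$ startup cost) and thereafter each evaluation is a single linear scan computing one bit per gate.

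There is no real obstacle here; the lemma is essentially a sanity check that the underlying problem is easy when one is only asked to query the disjunction ``has $C$ ever accepted the current string.'' The whole point of the construction is that this nonretroactive problem is trivial, while the partially retroactive version will be hard because retroactively editing a \op{set} or \op{increment} can silently change whether some \emph{past} current string satisfied $C$, which the eager flag cannot track.
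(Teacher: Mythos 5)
Your proposal is correct and matches the paper's proof: both maintain the eventual \op{query} answer as a flag, re-evaluate $C$ in $O(r)$ time after each \op{set} or \op{increment}, and answer \op{query} in constant time. The extra details you supply (topological sort at \op{initialize}, word-RAM evaluation cost) are fine elaborations of the same argument.
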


\begin{proof}
	The implementation will maintain what the current response to \op{query} would be; initially this is False. The rest is straightforward:
	\begin{itemize}
		\item \op{initialize}$(C)$ simply records $C$, taking time $O(r)=2^{o(n)}$.
		\item \op{set}$(x)$ sets the current string to $x$, and evaluates $C$ on it. If $C$ is satisfied, set the response to True. The runtime is dominated by evaluating $C$, which takes time $O(r)$.
		\item \op{increment}$()$ increments the current string, evaluates $C$ on it, and sets the response to True if $C$ is satisfied. This also takes time $O(r)$.
		\item \op{query}$()$ returns the current response, in time $O(1)$.
    \qedhere
	\end{itemize}
\end{proof}

\begin{theorem}\label{thm:circuitsat hard}
  There is a sequence of $\Theta(2^{n/2})$ operation calls
  for $\partially{\text{Circuit Counter}}$ such that,
	assuming Conjecture~\ref{conj:circuitsat},
  any implementation requires $2^{n-o(n)}$ total time,
  for an amortized lower bound of $n^{n/2-o(n)}$ time per operation.
\end{theorem}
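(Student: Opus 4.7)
The plan is to reduce SIZE$(2^{o(n)})$ CircuitSAT to $\partially{\text{Circuit Counter}}$ on a fixed-in-advance sequence of $\Theta(2^{n/2})$ operations, so that an implementation whose \emph{total} cost on that sequence is $o(2^{n-o(n)})$ would violate Conjecture~\ref{conj:circuitsat}. The core trick is a standard meet-in-the-middle split: the $n$ input bits are cut into a high half and a low half of $n/2$ bits each. The low half will be enumerated ``forward in time'' by a long block of \op{increment} calls that sits in the timeline once and for all, while the high half will be chosen $2^{n/2}$ different times by retroactively inserting a single \op{set} operation in the past.

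Concretely, given a CircuitSAT instance $C$ with $n$ inputs and size $r=2^{o(n)}$, I would first build the base timeline by calling \op{initialize}$(C)$ followed by $2^{n/2}-1$ calls to \op{increment}. Then, for each high half $h \in \{0,1,\dots,2^{n/2}-1\}$, perform the three-step retroactive probe: \Insert{\op{set}} at position $2$ with argument $h \cdot 2^{n/2}$ (viewed as an $n$-bit string), call \op{query}, and then \Delete{} at position $2$ to remove the inserted \op{set}. After the insertion, the modified timeline is \op{initialize}$(C)$, \op{set}$(h\cdot 2^{n/2})$, followed by $2^{n/2}-1$ increments, so the current string takes on exactly the values $h\cdot 2^{n/2}, h\cdot 2^{n/2}+1, \dots, h\cdot 2^{n/2}+2^{n/2}-1$, i.e.\ every $n$-bit string with high half~$h$. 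By the ADT specification, \op{query} returns True iff some string with high half $h$ satisfies $C$, so the disjunction of the $2^{n/2}$ query answers decides satisfiability of $C$. The total number of calls is $1 + (2^{n/2}-1) + 3\cdot 2^{n/2} = \Theta(2^{n/2})$.

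For the lower bound: suppose some implementation of $\partially{\text{Circuit Counter}}$ executed this fixed sequence in total time $T_{\mathrm{total}}$. Since the reduction itself needs only $O(r + 2^{n/2}) = 2^{o(n)} + O(2^{n/2}) = 2^{n/2+o(n)}$ time to generate the operation sequence (each \op{set} argument is an $n$-bit integer with a simple closed form) and to read off the disjunction of the query answers, we would solve CircuitSAT in time $T_{\mathrm{total}} + 2^{n/2+o(n)}$. Conjecture~\ref{conj:circuitsat} forces this to be $2^{n-o(n)}$, hence $T_{\mathrm{total}} = 2^{n-o(n)}$, which averaged over $\Theta(2^{n/2})$ operations gives an amortized per-operation cost of $2^{n/2-o(n)}$.

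The argument is not really obstructed anywhere; the only subtlety is making sure the amortized reading is legitimate, which is automatic because the operation sequence is determined in advance by $C$ and $n$ alone (no adversary, no adaptive queries), so any total-time lower bound on this sequence applies verbatim to any amortized bound. A minor bookkeeping point I would be careful about is the initial value of the current string (so that the $h=0$ probe without a \op{set} is handled cleanly); inserting \op{set} even for $h=0$, as above, sidesteps this by never relying on the implicit initial state.
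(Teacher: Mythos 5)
Your proposal is correct and matches the paper's own reduction essentially step for step: build the timeline with \op{initialize} plus a block of \op{increment}s, then for each high half retroactively insert a \op{set} at position~2, call \op{query}, and \Delete, taking the disjunction of the answers (the paper uses $2^{n/2}$ increments and arguments $y0^{n/2}$, you use $2^{n/2}-1$ and $h\cdot 2^{n/2}$, an immaterial difference). Your explicit accounting of the reduction's own overhead and the legitimacy of the amortized reading is a slightly more careful write-up of the same argument.
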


\begin{proof}
	We will use a partially retroactive Circuit Counter to solve SIZE$(2^{o(n)})$ CircuitSAT. Given a circuit $C$ of size $r=2^{o(n)}$,
	we run the following sequence of operations:
	\begin{itemize}
		\item $\Insert{\op{initialize}}(1,C)$
		\item $2^{n/2}$ copies of $\Insert{\op{increment}}(2)$
		\item For each binary string $y$ of length $n/2$:
		\begin{itemize}
			\item $\Insert{\op{set}}(2,y0^{n/2})$
			\item \op{query}$()$
			\item \Delete$(2)$
		\end{itemize}
	\end{itemize}
	After the first three steps, the sequence of calls to the Circuit Counter is \op{initialize}$(C)$ and then $2^{n/2}$ copies of \op{increment}$()$. For each $y$, we insert \op{set}$(y0^{n/2})$ (i.e. \(y\) followed by \(n/2\) zero bits) right after the \op{initialize}, and query at the end. Since the \op{increment}s count through all strings starting with $y$, the \op{query} returns True if and only if any such string satisfies $C$. We then clean up the operation inserted before the next loop.

	After all $2^{n/2}$ loops, we have tested every possible input to $C$: thus $C$ is satisfiable if and only if any call to \op{query} returned True.
	Conjecture~\ref{conj:circuitsat} says that this whole algorithm must take $2^{n-o(n)}$ time.
  %We call $O(2^{n/2})$ operations of the partially retroactive Circuit Counter, so some operation must take $2^{n/2-o(n)}$ time.
\end{proof}

The implementation from Lemma~\ref{lem:circuitsat easy} uses $O(2^{o(n)})$ time per operation and we have $m=\Theta(2^{n/2})$ operations, so the gap here is $2^{n/2-o(n)}/2^{o(n)}=2^{n/2-o(n)}=m^{1-o(1)}$.
%, which is tight to within $m^{o(1)}$. 

\section{\boldmath Lower Bound from Online $(\min,+)$ Product}\label{sec:online min + product}

In this section, we prove a conditional gap for partial retroactivity using the ADT \defn{$(\min,+)$ Multiplier}, which has the following operations and behavior:
\begin{itemize}
	\item \op{set-a}$(v)$: given an $n$-component row vector $v$, save it internally as $a$.
	\item \op{set-b}$(v)$: given an $n$-component column vector $v$, save it internally as $b$.
	\item \op{query}$()$: output the list of $a \diamond b$ for all historical values of the pair $(a,b)$.
  Here $a\diamond b$ is the $(\min,+)$ product $\min_i (a_i+b_i)$.
\end{itemize}

\begin{lemma}\label{lem:min + multiplier easy}
	There is an implementation of $(\min,+)$ Multiplier in which \op{set-a} and \op{set-b} take $O(n)$ time, and \op{query} takes $O(m)$ time, where $m$ is the number of calls to update operations.
\end{lemma}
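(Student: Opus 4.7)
The plan is to give a direct implementation that maintains the current vectors $a$ and $b$ and lazily records the needed products as the timeline evolves, so that \op{query} only has to output a list that has already been built.

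First, I would store $a$ and $b$ explicitly as length-$n$ arrays, together with flags indicating whether each has yet been assigned. I would also maintain an append-only list $L$ of integers (implemented as a resizable array or a linked list), which will hold one entry per historical pair $(a,b)$ that has existed after both vectors have been set at least once.

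Next, I would implement \op{set-a}$(v)$ by copying $v$ into the $a$ slot in $O(n)$ time; if $b$ has already been set, I would then compute $a\diamond b = \min_i (a_i+b_i)$ in a single pass over the two arrays, also in $O(n)$ time, and append the result to $L$. The operation \op{set-b}$(v)$ is symmetric. Thus each update operation runs in $O(n)$ time total. Since at most one entry is appended to $L$ per update operation, after $m$ operations we have $|L| \le m$.

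For \op{query}$()$, I would simply output the contents of $L$ in order, which takes time $O(|L|) = O(m)$. Correctness is immediate from the ADT specification: the $i$th entry of $L$ is $a\diamond b$ for the $i$th distinct pair $(a,b)$ that has ever been the current state after both vectors became defined. There is no real obstacle here; the only minor subtlety is handling the prefix of operations before both $a$ and $b$ are set, which is resolved by the flags so that we do not append a spurious entry during that prefix.
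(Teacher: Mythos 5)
Your proposal is correct and follows essentially the same approach as the paper: maintain the running list of $(\min,+)$ products, appending $a\diamond b$ on each \op{set-a} or \op{set-b} in $O(n)$ time, so that \op{query} merely outputs the stored list in $O(m)$ time. The only difference is your explicit handling of the prefix before both vectors are set, a minor point the paper leaves implicit via undefined behavior.
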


\begin{proof}
	This is straightforward; we maintain the list of what \op{query} should output:
	\begin{itemize}
		\item \op{set-a}$(v)$ records $v$ as $a$, computes $a\diamond b$, and appends it to the list. This takes time $O(n)$.
		\item \op{set-b}$(v)$ records $v$ as $b$, computes $a\diamond b$, and appends it to the list. This also takes time $O(n)$.
		\item \op{query}$()$ returns the list, which takes time $O(m)$ because it has length $O(m)$.
    \qedhere
	\end{itemize}
\end{proof}

\begin{theorem}\label{thm:min + multiplier hard}
  There is a sequence of $\Theta(n)$ operation calls
  for $\partially{\text{$(\min,+)$ Multiplier}}$ such that,
	assuming Conjecture~\ref{conj:online min + product},
  any implementation requires $n^{3-o(1)}$ total time,
  for an amortized lower bound of $n^{2-o(1)}$ time per operation.
\end{theorem}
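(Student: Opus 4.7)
The plan is to reduce Online $(\min,+)$ Product to $\partially{\text{$(\min,+)$ Multiplier}}$ using only $\Theta(n)$ operation calls, so that an $o(n^{2-o(1)})$ amortized implementation would beat Conjecture~\ref{conj:online min + product}. Given the input matrix $A$ with rows $A_1,\dots,A_n$, I would first build a scaffolding timeline by calling $\Insert{\op{set-a}}(k, A_k)$ for $k=1,\dots,n$, yielding the timeline $[\op{set-a}(A_1),\dots,\op{set-a}(A_n)]$. Then, as each online vector $v_i$ arrives, I would perform three further calls: $\Insert{\op{set-b}}(1, v_i)$ to prepend $\op{set-b}(v_i)$ to the timeline; then $\op{query}()$; then $\Delete(1)$ to restore the scaffolding before $v_{i+1}$ arrives.

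The key observation is that when the $i$th query fires, the timeline is $[\op{set-b}(v_i),\op{set-a}(A_1),\dots,\op{set-a}(A_n)]$. The state right after $\op{set-b}(v_i)$ has $a$ uninitialized, so the first entry of the returned list is undefined and we ignore it. But the state right after $\op{set-a}(A_j)$ has $a=A_j$ and $b=v_i$, so the ADT forces the corresponding entry of the list to be $A_j\diamond v_i$. Hence the last $n$ entries give exactly $A\diamond v_i$, answering the $i$th round of Online $(\min,+)$ Product.

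The total operation count is $n+3n=4n=\Theta(n)$, and the reduction is strictly online (the output for $v_i$ is produced before $v_{i+1}$ is read). If every operation ran in amortized $O(T)$, the whole reduction would use $O(nT)$ time; by Conjecture~\ref{conj:online min + product} this must be at least $n^{3-o(1)}$, yielding $T=n^{2-o(1)}$. This is consistent with the $\Omega(n)$ time inherent in writing the length-$n$ query output, since $n\ll n^2$. The main subtlety to flag is the undefined first entry of the query output: the ADT explicitly allows ``any result is valid'' there, but this causes no trouble because we rely only on the $n$ entries corresponding to fully initialized states, which the ADT pins down. The mild obstacle is choosing the insertion pattern so that one query simultaneously harvests all $n$ products $A_j\diamond v_i$; prepending $\op{set-b}(v_i)$ at position $1$ accomplishes this by causing $v_i$ to be re-paired with each $A_j$ as the timeline rolls forward through the unchanged scaffolding.
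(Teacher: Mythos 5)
Your reduction is correct and is essentially identical to the paper's proof: the same scaffolding of $n$ calls $\Insert{\op{set-a}}(j,A_j)$, the same per-vector pattern $\Insert{\op{set-b}}(1,v_i)$, $\op{query}()$, $\Delete(1)$, and the same operation count and amortized-time conclusion. Your extra care about the undefined first entry of the query output (before $a$ is initialized) is a minor refinement the paper glosses over, and it does not change the argument.
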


\begin{proof}
	We will use a partially retroactive $(\min,+)$ Multiplier to solve Online $(\min,+)$ Product. Given the $n \times n$ matrix $A$ and the $n$ vectors $v_i$ one by one,
	we run the following sequence of operations:
	\begin{itemize}
		\item For each row $A_j$ of $A$:
		\begin{itemize}
			\item $\Insert{\op{set-a}}(j,A_j)$
		\end{itemize}
		\item For each $v_i$:
		\begin{itemize}
			\item $\Insert{\op{set-b}}(1,v_i)$
			\item \op{query}$()$, and output the result
			\item \Delete$(1)$
		\end{itemize}
	\end{itemize}
	The first loop just sets up a sequence of $n$ \op{set-a} operations for the rows of $A$.
	The second loop inserts \op{set-b}$(v_i)$ at the beginning. Now the historical values of $(a,b)$ are precisely $(A_j,v_i)$ for each row $A_j$, so the result of the \op{query} is $A\diamond v_i$. Finally it removes the \op{set-b} to prepare for the next iteration.

	Conjecture~\ref{conj:online min + product} says that this whole algorithm must take $n^{3-o(1)}$ time.
  %We call $O(n)$ operations of the partially retroactive $(\min,+)$ Multiplier, so some operation must take $n^{2-o(1)}$ time.
\end{proof}

The sequence of operations in Theorem~\ref{thm:min + multiplier hard} has $m=\Theta(n)$ operations, in which case all operation running times from Lemma~\ref{lem:min + multiplier easy} are $O(n)$. So the gap is $n^{1-o(1)}=m^{1-o(1)}$.
%, tight to within $m^{o(1)}$.

\section{Lower Bound from 3SUM}\label{sec:3sum}

In this section, we prove a conditional gap for partial retroactivity using the ADT \defn{3-Summer}, which has the following operations and behavior:
\begin{itemize}
	\item \op{set-a}$(L)$: given a length-$\sqrt n$ list $L$, save it internally as $A$.
	\item \op{set-b}$(L)$: given a length-$\sqrt n$ list $L$, save it internally as $B$.
	\item \op{set-c}$(L)$: given a length-$n$ list $L$, save it internally as $C$.
	\item \op{query}$()$: return True if at any time, there has been a triple $(a,b,c)\in A\times B\times C$ satisfying $a+b+c=0$. We call such a triple \defn{good}.
\end{itemize}

Note that the internal lists $A$ and $B$ have length $\sqrt n$ while $C$ has length $n$.

\begin{lemma}\label{lem:3summer easy}
	There is an implementation of 3-Summer supporting each \op{set} operation in $O(n)$ time and \op{query} in $O(1)$ time.
\end{lemma}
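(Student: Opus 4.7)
The plan is to maintain the three lists $A$, $B$, $C$ alongside an auxiliary hash set $H$ and a boolean flag $F$. The set $H$ stores all pairwise sums $\{a+b : a \in A,\, b \in B\}$, and $F$ records whether any past configuration of $(A,B,C)$ has contained a good triple. With these invariants, \op{query}$()$ simply returns $F$ in $O(1)$ time.

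The key arithmetic observation is that $|A| \cdot |B| = \sqrt n \cdot \sqrt n = n$, which matches the length of $C$. So $H$ has at most $n$ entries and can be rebuilt in $O(n)$ time, while a full scan of $C$ against $H$ also takes $O(n)$ time using hashing. On a call to \op{set-a}$(L)$, I would replace $A$ with $L$, rebuild $H$ from scratch by iterating over all $(a,b) \in A \times B$ and inserting each sum $a+b$ into a freshly allocated hash table, and then iterate over every $c \in C$ checking whether $-c \in H$, setting $F$ to True if so. The case \op{set-b}$(L)$ is symmetric. On \op{set-c}$(L)$, I simply replace $C$ and redo only the final scan against the already-maintained $H$.

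Correctness follows because every update immediately triggers a complete check of the current configuration, so $F$ flips to True on the first update for which a good triple exists, and remains True thereafter. I do not see a substantive obstacle: the only nontrivial ingredient is the counting balance that ties together the $\sqrt n$-length of $A$ and $B$ with the $n$-length of $C$ to make rebuilding $H$ and scanning $C$ both fit in $O(n)$, plus the use of universal hashing on the word RAM for $O(1)$-expected lookups, which matches the paper's stated allowance for randomized algorithms.
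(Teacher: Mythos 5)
Your proposal is correct and follows essentially the same approach as the paper: after each update, perform an $O(n)$ hash-table check for a good triple in the current configuration (exploiting $|A|\cdot|B| = |C| = n$) and maintain a sticky boolean that \op{query} returns in $O(1)$. The only cosmetic difference is that you hash the pairwise sums $A\times B$ and scan $C$ (reusing the table on \op{set-c}), whereas the paper hashes $C$ and scans $A\times B$.
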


\begin{proof}
	This is a bit more complicated than the implementations from the previous two sections. As usual, we maintain the value that \op{query} should return.
	The key fact is that we can test whether there is \emph{currently} a good triple in time $O(n)$: make a hash table containing the values of $C$, then iterate through all $n$ values of $(a,b)\in A\times B$, and check whether $-a-b$ is in the hash table. The rest is straightforward:
	\begin{itemize}
		\item \op{set-a}$(L)$ records $L$ as $A$, checks if there's a good triple, and sets the query value to True if so. The runtime is dominated by checking for a good triple, which takes time $O(n)$.
		\item \op{set-b}$(L)$ records $L$ as $B$, and then proceeds just like \op{set-a}.
		\item \op{set-c}$(L)$ records $L$ as $C$, and checks for a good triple. Both steps take $O(n)$ time.
		\item \op{query}$()$ returns the current query value, in time $o(1)$.
    \qedhere
	\end{itemize}
\end{proof}

\begin{theorem}\label{thm:3summer hard}
  There is a sequence of $\Theta(\sqrt n)$ operation calls
  for $\partially{\text{3-Summer}}$ such that,
	assuming Conjecture~\ref{conj:3sum},
  any implementation requires $n^{2-o(1)}$ total time,
  for an amortized lower bound of $n^{3/2-o(1)}$ time per operation.
\end{theorem}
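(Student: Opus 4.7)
The strategy mirrors Theorems~\ref{thm:circuitsat hard} and~\ref{thm:min + multiplier hard}: the plan is to reduce 3SUM$'$ on $n$-element sets to $\partially{\text{3-Summer}}$ using only $\Theta(\sqrt n)$ operation calls, so that the $n^{2-o(1)}$ hardness of 3SUM$'$ from Conjecture~\ref{conj:3sum} forces the same total-time lower bound, yielding an amortized per-operation bound of $n^{3/2-o(1)}$.

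Given a 3SUM$'$ instance on sets $A_*, B_*, C_*$ of size $n$, I would first split $A_*$ and $B_*$ into $\sqrt n$ contiguous blocks of size $\sqrt n$, denoted $A_*^{(1)}, \dots, A_*^{(\sqrt n)}$ and $B_*^{(1)}, \dots, B_*^{(\sqrt n)}$, matching the internal list lengths required by 3-Summer (which wants $A,B$ of length $\sqrt n$ and $C$ of length $n$). Next, I would build an initial timeline by calling $\Insert{\op{set-c}}(1, C_*)$ followed by $\Insert{\op{set-a}}(i{+}1, A_*^{(i)})$ for $i=1,\dots,\sqrt n$. Then, for each $j = 1,\dots,\sqrt n$, I would call $\Insert{\op{set-b}}(2, B_*^{(j)})$, then \op{query}$()$, then $\Delete(2)$, ORing the returned Booleans into a running flag.

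After the insertion for a fixed $j$, the timeline reads $\op{set-c}(C_*), \op{set-b}(B_*^{(j)}), \op{set-a}(A_*^{(1)}), \dots, \op{set-a}(A_*^{(\sqrt n)})$, so the historical internal states of $(A,B,C)$ include $(A_*^{(i)}, B_*^{(j)}, C_*)$ for every $i$. Hence \op{query}$()$ returns True iff some block pair $(i,j)$ admits a good triple, and the final OR decides the 3SUM$'$ instance. The total operation count is $1 + \sqrt n + 3\sqrt n = \Theta(\sqrt n)$, so by Conjecture~\ref{conj:3sum} any implementation must spend $n^{2-o(1)}$ time on this sequence, giving the claimed amortized bound; since Lemma~\ref{lem:3summer easy} implements each operation in $O(n)$ time, this is also an $m^{1-o(1)}$ gap with $m = \Theta(\sqrt n)$.

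The main subtlety I expect is confirming the semantics of \op{query} in the partially retroactive model: one must verify that at the moment \op{query}$()$ is called, the post-edit timeline (with \op{set-b} sitting at position $2$) really does realize every $(A_*^{(i)}, B_*^{(j)}, C_*)$ as a ``historical'' triple seen by the ADT, and that $\Delete(2)$ restores the timeline cleanly between iterations so that loop iterations do not interfere. Both points follow immediately from the definition of $\partially{\text{3-Summer}}$, because the query is always evaluated against the current edited timeline rather than any intermediate state, so the proof reduces to careful bookkeeping rather than any deeper technical ingredient.
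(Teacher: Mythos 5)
Your proposal is correct and matches the paper's proof in all essentials: the paper loads \op{set-c}$(C)$ and the $\sqrt n$ blocks of $B$ into the timeline and repeatedly inserts/queries/deletes a \op{set-a} block at position~2, while you do the same with the roles of $A$ and $B$ swapped, which is an immaterial symmetry. The operation count, the use of the history-accumulating \op{query}, and the amortized $n^{3/2-o(1)}$ conclusion all coincide with the paper's argument.
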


\begin{proof}
	We will use a partially retroactive 3-Summer to solve 3SUM.
	Given the lists $A$, $B$, and $C$ of length $n$, we first divide $A$ into $\sqrt n$ lists $A_1,\dots,A_{\sqrt n}$ of length $\sqrt n$, and similarly for $B$.%
	\footnote{For simplicity we're assuming $n$ is a perfect square; otherwise one can take ceilings as appropriate.}
	Now we run the following sequence of operations on the partially retroactive 3-Summer:
	\begin{itemize}
		\item $\Insert{\op{set-c}}(C)$
		\item For $j$ from $1$ to $\sqrt{n}$:
		\begin{itemize}
			\item $\Insert{\op{set-b}}(j+1,B_j)$
		\end{itemize}
		\item For $i$ from $1$ to $\sqrt{n}$:
		\begin{itemize}
			\item $\Insert{\op{set-a}}(2,A_i)$
			\item \op{query}$()$
			\item \Delete$(2)$
		\end{itemize}
	\end{itemize}
	The first two steps set up operations to initialize $C$, and then set the internal list for $B$ to each section of $B$ in order.
	Then, for each section $A_i$ of $A$, we insert \op{set-a}$(A_i)$ right after the \op{set-c}. Now the historical values of $(A,B)$ are $(A_i,B_j)$ varying over $j$, so the \op{query} returns True if and only if there is a good triple $(a,b,c)\in A_i\times B\times C$. Finally, we remove the \op{set-a} to prepare for the next iteration.

	Through the second loop, we check each section of $A$ for a good pair. Ultimately, at least one of the calls to \op{query} returns True if and only if there is a good pair $(a,b,c)\in A\times B\times C$.
	Conjecture~\ref{conj:3sum} says that this whole algorithm must take $n^{2-o(1)}$ time.
  %We call $O(\sqrt n)$ operations of the partially retroactive 3-Summer, so some operation must take $n^{3/2-o(1)}$ time.
\end{proof}

The gap between the retroactive lower bound in Theorem~\ref{thm:3summer hard} and the nonretroactive upper bound in Lemma~\ref{lem:3summer easy} is $n^{1/2-o(1)}$.
Because the sequence of operations in Theorem~\ref{thm:3summer hard} uses $m=\Theta(\sqrt n)$ operations, this gap is again $m^{1-o(1)}$.

\section{Conclusion}
\label{sec:Conclusion}

%, along with the lower bounds in Chen et al.~\cite{chen2018nearly} and Frandsen et al.~\cite{frandsen2001lower},
This paper gives compelling evidence that there is no general way to make certain data structures retroactive with significantly better running time than the naive rollback method.
This worst-case impossibility motivates the question of which problems
\emph{can} be made retroactive with low overhead
(say, a polylogarithmic factor),
such as deques \cite{demaine2007retroactive},
priority queues \cite{demaine2007retroactive,FullyRetroactive_WADS2015}, predecessor \cite{retroactive-predecessor}, and data structures whose updates all commute\cite{demaine2007retroactive}.

On the other side, the $L$-evaluation framework of Chen et al.~\cite{chen2018nearly} appears to be a fairly general method for taking a problem with a computational lower bound and using it to generate a gap between partial and full retroactivity. Perhaps the methods in that paper and this one can be generalized to understand in a broader sense what makes data structures resistant to being made retroactive. This perspective may also be useful in the other direction, granting some intuition about what makes algorithmic problems computationally difficult.

Our applications to the computational cost of time travel fail to account
for the physical world's ability to perform quantum computation.
It would be interesting to extend our results to obtain similar separations
for quantum models of computation, where in particular search can be
sped up by Grover's algorithm.
See \cite{Buhrman-Patro-Speelman-2021} for quantum analogs to
some of our fine-grained assumptions.

\bibliography{main.bib}

\end{document}